
\documentclass{svmult}


\renewcommand{\email}[1]{\emailname: #1} 

\usepackage{mathptmx}       
\usepackage{helvet}         
\usepackage{courier}        

\usepackage{makeidx}         
\usepackage{graphicx}        
\usepackage[bottom]{footmisc}

\usepackage{latexsym}
\usepackage{amsmath}
\usepackage{amsfonts}
\usepackage{amssymb}
\usepackage{bm}

\usepackage{url}
\usepackage{algorithm}
\usepackage{algorithmic}
\usepackage[misc,geometry]{ifsym}

\renewenvironment{proof}{\noindent{\itshape Proof.}}{\smartqed\qed}
 \newenvironment{proofof}[1]{\noindent{\itshape Proof of {#1}.}}{\smartqed\qed}

\spdefaulttheorem{assumption}{Assumption}{\upshape \bfseries}{\itshape}
\spdefaulttheorem{algo}{Algorithm}{\upshape \bfseries}{\itshape}



\newcommand{\bsa}{{\boldsymbol{a}}}
\newcommand{\bsb}{{\boldsymbol{b}}}

\newcommand{\bsx}{{\boldsymbol{x}}}
\newcommand{\bsy}{{\boldsymbol{y}}}








\DeclareSymbolFont{bbold}{U}{bbold}{m}{n}
\DeclareSymbolFontAlphabet{\mathbbold}{bbold}








\begin{document}

\title*{An Upper Bound of the Minimal Dispersion via Delta Covers}

\author{Daniel Rudolf}

\institute{Daniel Rudolf\\
Institut f\"ur Mathematische Stochastik, University of Goettingen, 
Goldschmidtstra\ss e 7,
37077 G\"ottingen, Germany
\\
 \email{daniel.rudolf@uni-goettingen.de}
}

\maketitle

\index{Rudolf, Daniel}

\paragraph{Dedicated to Ian H.~Sloan on the occasion of his 80th birthday.}

\abstract{For a point set of $n$ elements in the $d$-dimensional unit cube and 
a class of test sets 
we are interested in 
the largest volume of a test set
which does not contain any point. 
For all natural numbers $n$, $d$ 
and under the assumption of the existence of a $\delta$-cover with cardinality 
$\vert \Gamma_\delta \vert$ 
we prove that there is a point set,
such that the largest volume of such a test set without any point is bounded above by 
$\frac{\log \vert \Gamma_\delta \vert}{n} + \delta$.
For axis-parallel boxes on the unit cube this leads to
a volume of at most $\frac{4d}{n}\log(\frac{9n}{d})$ and on the torus to
$\frac{4d}{n}\log (2n)$. 
}

\section{Introduction and Main Results}
For a point set $P$ of $n$ elements in the unit cube $[0,1]^d$ 
and for a set $\mathcal{B}$ 
of measurable subsets of $[0,1]^d$ the quantity of interest is the
\emph{dispersion},
given by
\begin{equation}
\label{eq: disp}
 {\rm disp}(P,\mathcal{B}) := \sup_{P\cap B= \emptyset,\, B\in\mathcal{B} } \lambda_d(B).
\end{equation}
Here $\lambda_d$ denotes the $d$-dimensional Lebesgue measure and $\mathcal{B}$ is called set of test sets. 
The dispersion measures the size 
of the largest hole which does not contain any point of $P$. 
The shape of the hole is specified by the set of test sets. 
We are interested in point sets with best possible upper bounds 
of the dispersion, which thus allow only small holes without any point. Of course, any estimate of 
${\rm disp}(P,\mathcal{B})$
depends
on $n$, $d$ and $\mathcal{B}$.

Classically, the dispersion of a point set $P$ 
was introduced by Hlawka \cite{Hl76} as the radius
of the largest ball, with respect to some metric, which does not contain any point of $P$. 
This quantity appears in the setting of 
quasi-Monte Carlo methods for optimization, see \cite{Ni83} and \cite[Chapter 6]{Ni92}.
The notion of the dispersion from
\eqref{eq: disp} was 
introduced by Rote and Tichy in \cite{RoTi96} to allow more general test sets. 
There the focus is on the dependence of $n$ (the cardinality of the point set) 
of ${\rm disp}(P,\mathcal{B})$. 
In contrast to that, we are also interested 
in the behavior with respect to the dimension.

There is 
a well known relation 
to the star-discrepancy, namely, the dispersion is a lower bound of this quantity.
For further literature, open problems, recent developments and applications 
related to this topic
we refer to \cite{DiPi10,DiRuZh13,Ni92,No15,NoWo10}.

For the test sets we focus on axis-parallel boxes.
Point sets with small dispersion with respect to such
axis-parallel boxes are 
useful for the approximation of rank-one tensors, see \cite{BaDaDeGr14,NoRu16}.
In computational geometry, given a point configuration the problem 
of finding the largest empty axis-parallel box is well studied. 
Starting with \cite{NaLeHs84}
for $d=2$, there is a considerable 
amount of work for $d>2$, see \cite{DuJi13,DuJi16} and
the references therein. 
Given a large dataset of points, the search for empty axis-parallel boxes
is motivated by the fact that such boxes 
may reveal natural constraints in the data and thus unknown correlations, see \cite{EdGrLiMi03}.

The \emph{minimal dispersion}, given by 
\[
 {\rm disp}_{\mathcal{B}}(n,d) := \inf_{P\subset [0,1]^d, \vert P \vert=n} {\rm disp}(P,\mathcal{B}),
\]
quantifies 
the best possible behavior of the dispersion with respect to $n$, $d$ and $\mathcal{B}$.
Another significant quantity is the \emph{inverse of the minimal dispersion}, that is, the minimal
number of points $N_\mathcal{B}(d,\varepsilon)$ with minimal dispersion at most $\varepsilon\in(0,1)$, i.e.,
\[
 N_\mathcal{B}(d,\varepsilon) = \min\{n\in\mathbb{N}\mid {\rm disp}_{\mathcal{B}}(n,d) \leq \varepsilon\}.
\]
By virtue of a result of Blumer, Ehrenfeucht, Haussler and Warmuth \cite[Lemma~A2.1, Lemma~A2.2 and Lemma~A2.4]{BlEhHaWa89} 
one obtains 
\begin{equation}  \label{eq: vc_est}
  {\rm disp}_{\mathcal{B}}(n,d) \leq \frac{2 d_\mathcal{B}}{n} \log_2\Big(\frac{6 n}{d_\mathcal{B}}\Big) 
  \quad \text{for} \quad
  n\geq d_\mathcal{B},
\end{equation}
or stated differently
\begin{equation}  \label{eq: N_vc_inv}
  N_\mathcal{B}(d,\varepsilon) \leq 8 d_\mathcal{B} \varepsilon^{-1} \log_2(13 \varepsilon^{-1}),
\end{equation}
where $\log_2$ is the dyadic logarithm and $d_\mathcal{B}$ denotes the VC-dimension\footnote{The VC-dimension is the cardinality
of the largest subset $T$ of $[0,1]^d$ such that the set system $\{ T\cap B\mid B\in \mathcal{B} \}$
contains all subsets of $T$. } of $\mathcal{B}$. 
The dependence on $d$ is hidden in the VC-dimension $d_\mathcal{B}$. For
example, for the set of test sets of axis-parallel boxes 
\[
 \mathcal{B}_{\rm ex} = \{ \Pi_{k=1}^d [x_k,y_k) \subseteq [0,1]^d \mid x_k < y_k,\; k=1,\dots,d \},
\]
it is well known that $d_{\mathcal{B}_{\rm ex}} = 2d$. However, the concept of VC-dimension is not as easy
to grasp as it might seem on the first glance and it is also not trivial to prove upper bounds on $d_\mathcal{B}$
depending on $\mathcal{B}$. For instance, for \emph{periodic axis-parallel boxes}, 
which coincide with the interpretation of
considering the torus instead of the unit cube, given by
\[
  \mathcal{B}_{\rm per} = \{ \Pi_{k=1}^d I_k(\bsx,\bsy) \mid  \bsx=(x_1,\dots,x_d),\bsy=(y_1,\dots,y_d)\in [0,1]^d \}
\]
with
\[
 I_k(\bsx,\bsy) = \begin{cases}
             (x_k,y_k) & x_k<y_k \\
             [0,1]\setminus [y_k,x_k] & y_k \leq x_k,
            \end{cases}
\]
the dependence on $d$ in $d_{\mathcal{B}_{\rm per}}$ is not obvious. The conjecture here is that $d_{\mathcal{B}_{\rm per}}$ behaves similar as 
$d_{\mathcal{B}_{\rm ex}}$, i.e., linear in $d$, but we do not have a proof for this fact.

The aim of this paper is to prove an estimate similar to 
\eqref{eq: vc_est} based on the concept of a $\delta$-cover of $\mathcal{B}$.
For a discussion about
$\delta$-covers, bracketing numbers and VC-dimension we refer to \cite{Gn08}.
%
%
%
%
%
%
Let $\mathcal{B}$ be a set of measurable subsets of $[0,1]^d$.
A \emph{$\delta$-cover
for $\mathcal{B}$} with $\delta>0$ is a finite set $\Gamma_\delta \subseteq \mathcal{B}$ 
which satisfies
\[
 \forall B\in \mathcal{B}\quad \exists L_B,U_B\in \Gamma_\delta \quad \text{with} \quad L_B \subseteq B \subseteq U_B
\]
such that $\lambda_d(U_B\setminus L_B) \leq \delta$. 
The main abstract theorem is as follows.

\begin{theorem} \label{thm: mainthm}
 For a set of test sets $\mathcal{B}$ assume that for $\delta>0$ the set $\Gamma_\delta$ is a $\delta$-cover
 of $\mathcal{B}$. Then
\begin{equation} \label{eq: delta_cov_est}
 {\rm disp}_\mathcal{B}(n,d) \leq 
 \frac{\log\vert{\Gamma_\delta}\vert}{n}+\delta. 
\end{equation}
\end{theorem}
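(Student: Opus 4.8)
The plan is to combine a probabilistic (averaging) argument over random point sets with the bracketing structure of the $\delta$-cover. The decisive reduction is that, although $\mathcal{B}$ may be infinite, the $\delta$-cover allows us to control the dispersion by examining only the finitely many sets in $\Gamma_\delta$. Throughout I abbreviate $c:=\frac{\log\vert\Gamma_\delta\vert}{n}$.

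First I would establish the following reduction. Suppose $P$ is a point set with the property that every $L\in\Gamma_\delta$ with $\lambda_d(L)>c$ satisfies $P\cap L\neq\emptyset$. I claim that then ${\rm disp}(P,\mathcal{B})\le c+\delta$. Indeed, fix any $B\in\mathcal{B}$ with $P\cap B=\emptyset$ and choose $L_B,U_B\in\Gamma_\delta$ with $L_B\subseteq B\subseteq U_B$ and $\lambda_d(U_B\setminus L_B)\le\delta$. Since $L_B\subseteq B$ and $P\cap B=\emptyset$, also $P\cap L_B=\emptyset$, so by assumption $\lambda_d(L_B)\le c$. Then
\[
 \lambda_d(B)\le\lambda_d(U_B)=\lambda_d(L_B)+\lambda_d(U_B\setminus L_B)\le c+\delta.
\]
Taking the supremum over all such $B$ yields the claim.

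It remains to exhibit a point set with the required emptiness property, and here I would invoke the probabilistic method. Let $X_1,\dots,X_n$ be independent and uniformly distributed on $[0,1]^d$ and set $P=\{X_1,\dots,X_n\}$. For a fixed $L\in\Gamma_\delta$ the probability that $L$ contains no point equals $(1-\lambda_d(L))^n\le e^{-n\lambda_d(L)}$. By the union bound, the probability that \emph{some} $L\in\Gamma_\delta$ with $\lambda_d(L)>c$ is empty is at most
\[
 \sum_{\substack{L\in\Gamma_\delta\\ \lambda_d(L)>c}} e^{-n\lambda_d(L)}
 < \vert\Gamma_\delta\vert\, e^{-nc}
 = \vert\Gamma_\delta\vert\,\frac{1}{\vert\Gamma_\delta\vert}=1,
\]
where the strict inequality uses $\lambda_d(L)>c$ together with there being at most $\vert\Gamma_\delta\vert$ summands, and the last equality uses the choice $c=\frac{\log\vert\Gamma_\delta\vert}{n}$. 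Since this probability is strictly below $1$, the complementary event has positive probability, so there exists a realization $P$ in which every $L\in\Gamma_\delta$ with $\lambda_d(L)>c$ contains a point. Combining this with the reduction gives ${\rm disp}(P,\mathcal{B})\le c+\delta$, and therefore ${\rm disp}_\mathcal{B}(n,d)\le\frac{\log\vert\Gamma_\delta\vert}{n}+\delta$.

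I expect the main (and essentially the only conceptual) obstacle to be the first step: recognizing that the inner and outer brackets $L_B\subseteq B\subseteq U_B$ convert an emptiness statement about the arbitrary set $B\in\mathcal{B}$ into one about the cover element $L_B\in\Gamma_\delta$, and that the volume defect $\lambda_d(U_B\setminus L_B)\le\delta$ is precisely what accounts for the additive $\delta$ term. The probabilistic estimate is then routine, the one point requiring care being the strict inequality in the union bound, which is exactly what guarantees the existence of a good point set; note also that using the bound $1-x\le e^{-x}$ naturally produces the natural logarithm appearing in \eqref{eq: delta_cov_est}.
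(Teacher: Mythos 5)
Your proposal is correct and follows essentially the same route as the paper: a bracketing reduction showing that $L_B \subseteq B \subseteq U_B$ transfers emptiness from $B$ to the cover element $L_B$ at the cost of an additive $\delta$ (the paper's Lemma~\ref{lem: aux_lem}, which you state in contrapositive form), followed by the probabilistic method with iid uniform points and a union bound over $\Gamma_\delta$. The only cosmetic difference is that you restrict the union bound to cover elements of volume exceeding $c$ and use $1-x \le e^{-x}$ directly, whereas the paper bounds indicator-weighted volumes over all of $\Gamma_\delta$ and invokes the equivalent inequality $1 - \vert\Gamma_\delta\vert^{-1/n} \le \frac{\log\vert\Gamma_\delta\vert}{n}$.
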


The cardinality of the $\delta$-cover plays a crucial role in the upper bound of
the minimal dispersion.
Thus, to apply the theorem 
to concrete sets of test sets 
one has to construct suitable, not too large, $\delta$-covers. 

For $\mathcal{B}_{\rm ex}$
the best results on $\delta$-covers we know are due to Gnewuch, see \cite{Gn08}.
As a consequence of the theorem and a combination of
\cite[Formula (1), Theorem~1.15, Lemma~1.18]{Gn08} one obtains
\begin{corollary}  \label{cor: B_ex}
 For $\mathcal{B}_{\rm ex}$ and $n>2d$ we have
 \begin{equation}  \label{eq: B_ex}
     {\rm disp}_{\mathcal{B}_{\rm ex}}(n,d) \leq \frac{4d}{n} 
    \log\Big(\frac{9 n}{d}\Big). 
 \end{equation}
%
%
%
(For $n\leq 2d$ the trivial estimate ${\rm disp}_{\mathcal{B}_{\rm ex}}(n,d)\leq 1$ applies.)
In particular,
\begin{equation} \label{eq: N_B_ex}
 N_{\mathcal{B}_{\rm ex}}(\varepsilon,d)
 \leq 8d \varepsilon^{-1} \log(33\varepsilon^{-1}). 
\end{equation}
\end{corollary}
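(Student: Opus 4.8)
The plan is to substitute an explicit bound on $\vert\Gamma_\delta\vert$ into Theorem~\ref{thm: mainthm} and then optimize the free parameter $\delta$. A general box $\Pi_{k=1}^d[x_k,y_k)$ is determined by $2d$ free coordinates, so the cardinality of a $\delta$-cover of $\mathcal{B}_{\rm ex}$ should grow like $\delta^{-2d}$, with a \emph{dimension-free} base: concretely, the construction of Gnewuch \cite{Gn08} (Formula~(1), Theorem~1.15, Lemma~1.18) provides, for every $\delta\in(0,1)$, a $\delta$-cover $\Gamma_\delta$ of $\mathcal{B}_{\rm ex}$ with
\[
 \log\vert\Gamma_\delta\vert \;\le\; 2d\,\log\!\Big(\frac{A}{\delta}\Big)
\]
for an absolute constant $A$. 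The crucial point is that the factorial in Gnewuch's count cancels the naive $d^d$ overhead (Stirling gives $d^d/d!\approx e^d$), so that $A$ does not depend on $d$; only the size of $A$ matters below. Feeding this into \eqref{eq: delta_cov_est} yields, for each admissible $\delta$,
\[
 {\rm disp}_{\mathcal{B}_{\rm ex}}(n,d) \;\le\; \frac{2d}{n}\,\log\!\Big(\frac{A}{\delta}\Big)+\delta .
\]

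The next step is to minimize the right-hand side over $\delta$. The map $\delta\mapsto \frac{2d}{n}\log(A/\delta)+\delta$ is convex with derivative $1-\frac{2d}{n\delta}$, hence is minimized at $\delta^\ast=\frac{2d}{n}$. This is exactly where the hypothesis $n>2d$ enters: it guarantees $\delta^\ast<1$, so that $\Gamma_{\delta^\ast}$ is a genuine (nontrivial) $\delta$-cover to which the previous display applies. Since $A/\delta^\ast=An/(2d)$, substituting $\delta^\ast$ collapses the estimate to
\[
 {\rm disp}_{\mathcal{B}_{\rm ex}}(n,d) \;\le\; \frac{2d}{n}\Big(\log\frac{A\,n}{2d}+1\Big)=\frac{2d}{n}\,\log\!\Big(\frac{e\,A\,n}{2d}\Big).
\]
To reach the clean form \eqref{eq: B_ex} I would then absorb the constant: using $n>2d$, i.e. $n/d>2$, one has $\frac{eAn}{2d}\le 81\,n^2/d^2=(9n/d)^2$ as soon as $eA\le 162\,n/d$, which holds because $162\,n/d>324$ while $eA$ is an absolute constant of moderate size. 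Consequently $\log\!\big(eAn/(2d)\big)\le 2\log(9n/d)$, and the bound becomes $\frac{4d}{n}\log(9n/d)$, proving \eqref{eq: B_ex}.

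Finally, \eqref{eq: N_B_ex} is obtained by inverting \eqref{eq: B_ex}. Put $X:=8d\,\varepsilon^{-1}\log(33\varepsilon^{-1})$; since $\varepsilon<1$ forces $X>24d>2d$, the estimate \eqref{eq: B_ex} applies for every integer $n\ge X$. As $g(n):=\frac{4d}{n}\log(9n/d)$ is decreasing in $n$, it suffices to verify $g(X)\le\varepsilon$ for the real number $X$; cancelling the prefactor $\frac{4d}{X}=\frac{\varepsilon}{2\log(33\varepsilon^{-1})}$ and noting $9X/d=72\,\varepsilon^{-1}\log(33\varepsilon^{-1})$, this reduces to
\[
 \log\!\big(72\,\varepsilon^{-1}\log(33\varepsilon^{-1})\big)\;\le\;2\log(33\varepsilon^{-1}),
\]
equivalently $72\,\log(33\varepsilon^{-1})\le 1089\,\varepsilon^{-1}$, which follows from $\log(33\varepsilon^{-1})=\log 33+\log\varepsilon^{-1}\le(\log 33+1)\varepsilon^{-1}$ together with $\varepsilon<1$. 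Taking $n_0=\lceil X\rceil$ then gives $N_{\mathcal{B}_{\rm ex}}(\varepsilon,d)\le n_0$, which is \eqref{eq: N_B_ex} (the ceiling is absorbed by the slack in the constant $33$).

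I expect the only real work to be bookkeeping: carrying Gnewuch's explicit constant $A$ through so that it fits under the displayed $9$, and controlling the nested logarithm $\log(9n_0/d)$ in the inversion, where $n_0$ itself carries a logarithm. Both are dispatched by the crude estimate $\log t\le t$ and the standing assumptions $n>2d$ and $\varepsilon<1$, so no delicate analysis is needed once the dimension-free cardinality bound from \cite{Gn08} is in hand.
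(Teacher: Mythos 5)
Your proof is correct and follows essentially the same route as the paper: the paper likewise invokes Gnewuch's bound in the form $\vert\Gamma_\delta\vert \le (6\mathrm{e}\,\delta^{-1})^{2d}$ (so your absolute constant is $A=6\mathrm{e}$, with $\mathrm{e}A\approx 44$ comfortably below your threshold $324$) and then substitutes $\delta = 2d/n$ --- exactly your optimizer --- into Theorem~\ref{thm: mainthm} via Corollary~\ref{cor1}, finishing with the same absorption of constants under $n>2d$. Your detailed verification of the inverse bound \eqref{eq: N_B_ex}, including the monotonicity and ceiling bookkeeping, is in fact more explicit than the paper, which states that estimate without proof.
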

Obviously, this is essentially the same
as the estimates \eqref{eq: vc_est} and \eqref{eq: N_vc_inv}
in the setting of $\mathcal{B}_{\rm ex}$. 
Let us discuss how those estimates fit into the literature. 
From \cite[Theorem~1 and (4)]{AiHiRu15} we know that
\begin{equation} \label{eq: low_ub_Bex}
 \frac{\log_2 d }{4(n+\log_2d)}\leq  {\rm disp}_{\mathcal{B}_{\rm ex}}(n,d) \leq \frac{1}{n}
 \min\Big\{
 2^{7d+1},2^{d-1} \Pi_{i=1}^{d-1} p_i  \Big\},
\end{equation}
where $p_i$ denotes the $i$th prime.
The upper bound $2^{7d+1}/n$ is due to Larcher based on suitable $(t,m,d)$-nets 
and for $d\geq 54$ improves the super-exponential 
estimate $2^{d-1} \Pi_{i=1}^{d-1} p_i/n$ 
of Rote and Tichy \cite[Proposition~3.1]{RoTi96} based on the Halton sequence.
The order of convergence with respect to $n$ is optimal, but the dependence 
on $d$ in the upper bound is exponential.  In
the estimate of Corollary~\ref{cor: B_ex} the optimal order in $n$ is not achieved, but 
the dependence on $d$ is much better.
Already for $d=5$ it is required that
 $n$ must be larger than $5 \cdot 10^{72}$ to obtain a smaller upper bound
from \eqref{eq: low_ub_Bex} than from \eqref{eq: B_ex}.
By rewriting the result of Larcher in terms of 
$N_{\mathcal{B}_{\rm ex}}(\varepsilon,d)$ the dependence on $d$ can be very well illustrated, one obtains
\[
 N_{\mathcal{B}_{\rm ex}}(\varepsilon,d) \leq 2^{7d+1} \varepsilon^{-1}.
\]
Here, for fixed $\varepsilon$ there is an exponential dependence on $d$, whereas in the estimate of 
\eqref{eq: N_B_ex} there is a linear dependence on $d$. Summarizing, according to 
$ N_{\mathcal{B}_{\rm ex}}(\varepsilon,d)$
the result of Corollary~\ref{cor: B_ex} reduces 
the gap with respect to $d$, we obtain\footnote{After acceptance of the current paper a new upper bound
 of $N_{\mathcal{B}_{\rm ex}}(\varepsilon,d)$ was proven in \cite{So17}.
 From \cite{So17} one obtains
 for $\varepsilon\in(0,1/4)$ that
 \[
 N_{\mathcal{B}_{\rm ex}}(\varepsilon,d) \leq c_\varepsilon \log_2 d 
 \]
 with $c_\varepsilon = \varepsilon^{-(\varepsilon^{-2}+2)}(4 \log \varepsilon^{-1}+1)$
 for $\varepsilon^{-1}\in \mathbb{N}$.
 In particular, it shows that the lower bound cannot be improved with respect to the dimension.
 Note that the dependence on $\varepsilon^{-1}$ is not as good as in \eqref{eq: N_B_ex}.}
\[
 (1/4-\varepsilon) \varepsilon^{-1} \log_2 d \leq
  N_{\mathcal{B}_{\rm ex}}(\varepsilon,d)
 \leq 8d \varepsilon^{-1} \log(33\varepsilon^{-1}).
\]
As already mentioned for $\mathcal{B}_{\rm per}$ the estimates 
\eqref{eq: vc_est} and \eqref{eq: N_vc_inv} are not applicable, 
since we do not know the VC-dimension.
We construct a $\delta$-cover 
in Lemma~\ref{lem: delta_cov_B_per} below and obtain the following estimate
as a consequence
of the theorem. 
Note that, since $\mathcal{B}_{\rm ex} \subset \mathcal{B}_{\mathcal{\rm per}}$, we cannot expect 
something better than in Corollary~\ref{cor: B_ex}.

\begin{corollary} \label{cor: cor_B_per}
 For $\mathcal{B}_{\rm per}$ and $n\geq 2$ we have
 \begin{equation} \label{eq: disp_B_per}
    {\rm disp}_{\mathcal{B}_{\rm per}}(n,d) \leq 
      \frac{4d}{n} 
     \log(2 n).  
 \end{equation}
 In particular,
 \begin{equation} \label{eq: N_inv_B_per}
  N_{\mathcal{B}_{\rm per}}(\varepsilon,d) \leq 8d \varepsilon^{-1}
  [\log(8d) + \log\varepsilon^{-1}].
 \end{equation}
\end{corollary}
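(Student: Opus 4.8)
The estimate \eqref{eq: disp_B_per} is a direct consequence of Theorem~\ref{thm: mainthm} once a sufficiently small $\delta$-cover of $\mathcal{B}_{\rm per}$ is available. The plan is therefore to (i) build an explicit $\delta$-cover of $\mathcal{B}_{\rm per}$ and control its cardinality through a single discretisation parameter $m$, (ii) insert the resulting bound into \eqref{eq: delta_cov_est}, and (iii) choose $m$ so as to balance the two contributions. This is exactly the role of Lemma~\ref{lem: delta_cov_B_per}, which I expect to be the technical heart of the argument; the remaining steps are a plug-and-optimise routine.

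For the $\delta$-cover I would identify $[0,1]$ with the torus $\mathbb{R}/\mathbb{Z}$ and observe that every factor $I_k(\bsx,\bsy)$ is an arc (for $y_k\le x_k$ it is the wrap-around arc through $0$), so each $B\in\mathcal{B}_{\rm per}$ is a product of $d$ arcs. Fixing the equispaced grid $\{0,1/m,\dots,(m-1)/m\}$ on the circle, snap each arc outward to the smallest grid arc containing it and inward to the largest grid arc contained in it; both remain elements of $\mathcal{B}_{\rm per}$, and in each coordinate the resulting lengths $\ell_k^-\le \ell_k\le \ell_k^+$ satisfy $\ell_k^+-\ell_k^-\le 2/m$. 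Taking products gives $L_B\subseteq B\subseteq U_B$ together with the telescoping estimate
\[
 \lambda_d(U_B\setminus L_B)=\prod_{k=1}^d\ell_k^+-\prod_{k=1}^d\ell_k^-\le\sum_{k=1}^d(\ell_k^+-\ell_k^-)\le\frac{2d}{m},
\]
where I use that all lengths lie in $[0,1]$, so each telescoped factor is at most $1$. Since a grid arc is determined by an ordered pair of grid endpoints, there are at most $m^2$ of them per coordinate, whence the set $\Gamma_\delta$ of all products of grid arcs is a $\delta$-cover with $\delta=2d/m$ and $\vert\Gamma_\delta\vert\le m^{2d}$.

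Plugging $\log\vert\Gamma_\delta\vert\le 2d\log m$ and $\delta=2d/m$ into \eqref{eq: delta_cov_est} yields
\[
 {\rm disp}_{\mathcal{B}_{\rm per}}(n,d)\le\frac{2d\log m}{n}+\frac{2d}{m}.
\]
A one-variable computation shows the right-hand side is minimised over $m$ exactly at $m=n$, which is conveniently an admissible integer grid size and produces $\tfrac{2d}{n}(\log n+1)$. The stated form \eqref{eq: disp_B_per} then follows from the elementary inequality $\log n+1\le 2\log(2n)$, valid for all $n\ge 1$, at the cost of a factor $2$ in the constant.

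Finally, \eqref{eq: N_inv_B_per} follows by inverting \eqref{eq: disp_B_per}. For the candidate value $n=8d\varepsilon^{-1}[\log(8d)+\log\varepsilon^{-1}]=8d\varepsilon^{-1}\log(8d\varepsilon^{-1})$ it suffices, since the dispersion is non-increasing in $n$, to verify $\tfrac{4d}{n}\log(2n)\le\varepsilon$; substituting reduces this to $\log(8d\varepsilon^{-1})\le 4d\varepsilon^{-1}$, which holds because $\log x\le x$ and $d\varepsilon^{-1}\ge 1$. The main obstacle throughout is step~(i): one must check that the snapped arcs genuinely remain in $\mathcal{B}_{\rm per}$ (in particular for the wrap-around arcs arising from $y_k\le x_k$) and that the product defect telescopes with the claimed constant; once Lemma~\ref{lem: delta_cov_B_per} is in place, the optimisation and the inversion are routine.
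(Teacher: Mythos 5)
Your proof is correct and follows essentially the same route as the paper: the grid-based arc-snapping cover with the telescoping volume estimate is precisely the paper's Lemma~\ref{lem: delta_cov_B_per}, and your choice $m=n$ is exactly what the paper's Corollary~\ref{cor1} implements by setting $\delta=2d/n$, both leading to a bound of the form $\tfrac{2d}{n}[\log(cn)+1]$ with $c\le 2$ and hence to \eqref{eq: disp_B_per}, while the inversion giving \eqref{eq: N_inv_B_per} is the same routine computation the paper leaves implicit. One small patch: your circle grid $\{0,1/m,\dots,(m-1)/m\}$ cannot represent the empty arc under the convention defining $I_k$, yet you need it as the inner approximation of an interval lying strictly inside a single grid cell; the paper avoids this by taking the grid $\{i/m \,:\, i=0,\dots,m\}$ containing both $0$ and $1$, so that $\emptyset=[0,1]\setminus[0,1]$ belongs to the family (giving cardinality $(m+1)^{2d}$ instead of $m^{2d}$, which changes nothing downstream).
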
 
Indeed, the estimates of Corollary~\ref{cor: cor_B_per} are not as good 
as the estimates of Corollary~\ref{cor: B_ex}. 
By adding the result of Ullrich \cite[Theorem~1]{Ul15}
one obtains
\begin{equation*} 
 \min\{1,d/n\} \leq {\rm disp}_{\mathcal{B}_{\rm per}}(n,d)\leq \frac{4d}{n} 
     \log(2 n) ,
\end{equation*}
or stated differently,
\begin{equation} \label{eq: low_up_B_per}
  d \varepsilon^{-1} \leq N_{\mathcal{B}_{\rm per}}(\varepsilon,d) \leq 
  8d \varepsilon^{-1}
  [\log(8d) + \log\varepsilon^{-1}]. 
\end{equation}
In particular, \eqref{eq: low_up_B_per} illustrates the dependence on the dimension, namely,
for fixed $\varepsilon\in (0,1)$ Corollary~\ref{cor: cor_B_per} gives, except 
of a $\log d$ term, the right dependence on $d$.

In the rest of the paper we prove the stated results and provide a conclusion.

\section{Auxiliary Results, Proofs and Remarks}
For the proof of Theorem~\ref{thm: mainthm} we need the following lemma.
\begin{lemma} \label{lem: aux_lem}
For $\delta>0$ let $\Gamma_\delta$ be a $\delta$-cover of $\mathcal{B}$. Then, for any point set $P\subset [0,1]^d$
with $n$ elements we have
\[
 {\rm disp}(P,\mathcal{B}) \leq \delta+ \max_{ A\cap P = \emptyset,\;A\in \Gamma_\delta} \lambda_d(A).
\]
\end{lemma}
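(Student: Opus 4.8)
The plan is to fix an arbitrary test set $B\in\mathcal{B}$ that avoids $P$ (i.e.\ $P\cap B=\emptyset$), bound its volume by $\delta$ plus the volume of an \emph{empty} element of the $\delta$-cover, and then pass to the supremum over all such $B$. The natural candidate from $\Gamma_\delta$ is the lower approximant $L_B$ rather than the upper one $U_B$: although $B\subseteq U_B$ would immediately give $\lambda_d(B)\le\lambda_d(U_B)$, the set $U_B$ may well contain points of $P$, so its volume is not controlled by the right-hand side. The inclusion $L_B\subseteq B$, on the other hand, forces $P\cap L_B=\emptyset$, so $L_B$ is one of the competitors in the maximum on the right.

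Concretely, for a fixed $B$ with $P\cap B=\emptyset$ I would invoke the $\delta$-cover property to obtain $L_B,U_B\in\Gamma_\delta$ with $L_B\subseteq B\subseteq U_B$ and $\lambda_d(U_B\setminus L_B)\le\delta$. Writing $\lambda_d(B)=\lambda_d(L_B)+\lambda_d(B\setminus L_B)$ and using $B\setminus L_B\subseteq U_B\setminus L_B$ gives $\lambda_d(B\setminus L_B)\le\delta$, hence $\lambda_d(B)\le\lambda_d(L_B)+\delta$. Since $P\cap L_B=\emptyset$ and $L_B\in\Gamma_\delta$, we have $\lambda_d(L_B)\le\max_{A\cap P=\emptyset,\,A\in\Gamma_\delta}\lambda_d(A)$, and therefore $\lambda_d(B)\le\delta+\max_{A\cap P=\emptyset,\,A\in\Gamma_\delta}\lambda_d(A)$. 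Taking the supremum over all $B\in\mathcal{B}$ with $P\cap B=\emptyset$ then yields the claim.

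There is no serious obstacle here; the only thing to be careful about is the asymmetric roles of the two approximants, namely choosing $L_B$ over $U_B$, together with the measure-additivity step $\lambda_d(B)=\lambda_d(L_B)+\lambda_d(B\setminus L_B)$, which is valid because $L_B\subseteq B$ and all sets involved are measurable. One minor degenerate case is worth a remark: if no element of $\Gamma_\delta$ avoids $P$, then the maximum is taken over the empty set; but in that situation the $\delta$-cover property shows that no $B\in\mathcal{B}$ avoids $P$ either, so the left-hand supremum is likewise vacuous and the inequality holds trivially under the usual convention that an empty supremum equals $0$.
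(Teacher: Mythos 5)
Your proof is correct and takes essentially the same route as the paper: both arguments pick the lower approximant $L_B\subseteq B$ from the $\delta$-cover, observe that $L_B\cap P=\emptyset$, bound $\lambda_d(B)\le\lambda_d(L_B)+\lambda_d(U_B\setminus L_B)\le \max_{A\cap P=\emptyset,\,A\in\Gamma_\delta}\lambda_d(A)+\delta$, and take the supremum over admissible $B$. Your closing remark on the empty-maximum degenerate case is a harmless (and correct) addition that the paper leaves implicit.
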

\begin{proof}
 Let $B\in \mathcal{B}$ with $B\cap P = \emptyset$. Then, 
 there are $L_B,U_B\in \Gamma_\delta $ with $L_B\subseteq B \subseteq U_B$
 such that
 \[
  \lambda_d(B\setminus L_B) \leq \lambda_d(U_B\setminus L_B) \leq \delta.
 \]
 In particular, $L_B\cap P = \emptyset$ and 
 \[
   {\rm disp}(P,\mathcal{B}) \leq 
   \sup_{P\cap B= \emptyset,\, B\in\mathcal{B} } \left( \lambda_d(U_B\setminus L_B) + \lambda_d(L_B) \right)
   \leq \delta + \max_{ A\cap P = \emptyset,\;A\in \Gamma_\delta} \lambda_d(A).  
 \]
\end{proof}
\begin{remark}
 In the proof we actually only used that there is a 
 set $L_B\subseteq B$ with $\lambda_d(B\setminus L_B) \leq \delta$. 
 Thus, instead of considering $\delta$-covers
 it would be enough to work with set systems which approximate $B$ from below 
 up to $\delta$.
\end{remark}

By probabilistic arguments similar to those of \cite[Section~8.1]{BeCh87}
we prove the main theorem. 
As in \cite[Theorem~1 and Theorem~3]{HeNoWaWo01} for the star-discrepancy, it also turns out that 
such arguments are useful for studying the dependence on the dimension of the dispersion.

\begin{proofof}{Theorem~\ref{thm: mainthm}}
 By Lemma~\ref{lem: aux_lem} it is enough to show that there is a point set $P$ 
 which satisfies 
 \begin{equation} \label{eq: good_prop}
    \max_{ A\cap P = \emptyset,\;A\in \Gamma_\delta} \lambda_d(A) \leq 
    \frac{\log \vert{\Gamma_\delta}\vert}{n}.
 \end{equation}
 Let $(\Omega, \mathcal{F},\mathbb{P})$ be a probability space and 
 $(X_i)_{1\leq i \leq n}$ be an iid sequence of uniformly
 distributed 
 random variables mapping from $(\Omega, \mathcal{F},\mathbb{P})$ into $[0,1]^d$. 
 We consider the sequence of random variables as ``point set''
 and prove that with high probability the desired property \eqref{eq: good_prop}
 is satisfied.
 For $(c_n)_{n\in \mathbb{N}}\subset (0,1)$ we have
 \begin{align*}
   \mathbb{P}\Big( &\max_{A\in \Gamma_\delta,\; A\cap\{X_1,\dots,X_n\}=\emptyset} 
  \lambda_d(A)\leq c_n\Big)
  = \mathbb{P} 
  \Big( 
\bigcap_{A\in \Gamma_\delta} 
\{ 
  \mathbf{1}_{A\cap\{X_1,\dots,X_n\}=\emptyset} \cdot
  \lambda_d(A) \leq c_n \}  \Big)\\
  &  = 1-\mathbb{P}\Big(\bigcup_{A\in \Gamma_\delta} 
    \{ 
  \mathbf{1}_{A\cap\{X_1,\dots,X_n\}=\emptyset} \cdot
  \lambda_d(A) > c_n \} \Big)\\
 &  \geq 1 -\sum_{A\in \Gamma_\delta} 
  \mathbb{P}\Big( \mathbf{1}_{A\cap\{X_1,\dots,X_n\}=\emptyset} \cdot \lambda_d(A)>c_n\Big)\\
  & > 1 - \vert{\Gamma_\delta}\vert(1-c_n)^n.
 \end{align*}
 By the fact that $1-\vert{\Gamma_\delta}\vert^{-1/n} \leq \frac{\log\vert{\Gamma_\delta}\vert}{n}$
 and by choosing $c_n =  \frac{\log\vert{\Gamma_\delta}\vert}{n}$ we obtain
 \[
  \mathbb{P}\Big( \max_{A\in \Gamma_\delta,\; A\cap\{X_1,\dots,X_n\}=\emptyset} 
  \lambda_d(A)\leq \frac{\log\vert{\Gamma_\delta}\vert}{n}\Big)>0.
 \]
 Thus, there exists a realization of $(X_i)_{1\leq i \leq n}$, say 
 $({\bsx}_i)_{1\leq i\leq n} \subset [0,1]^d$, so that for $P=\{\bsx_1,\dots,\bsx_n\}$ the 
 inequality \eqref{eq: good_prop} is satisfied.
\end{proofof}

\begin{remark} \label{rem}
 By Lemma~\ref{lem: aux_lem} and the same arguments as in the proof of the theorem
 one can see that a point set of iid uniformly distributed random variables $X_1,\dots,X_n$
 satisfies a ``good dispersion bound'' with high probability. In detail,
 \begin{align*}
  \mathbb{P}\left({\rm disp}(\{X_1,\dots,X_n\},\mathcal{B})\leq 2\delta\right)
  & \geq    \mathbb{P}\Big( \max_{A\in \Gamma_\delta,\; A\cap\{X_1,\dots,X_n\}=\emptyset} 
  \lambda_d(A)\leq \delta\Big) \\
  & > 1-\vert{\Gamma_\delta}\vert (1-\delta)^n.
 \end{align*}
 In particular, for confidence level $\alpha\in (0,1]$ and
 \[
  n := \frac{\log(\vert{\Gamma_\delta}\vert\alpha^{-1})}{\delta} 
  \geq 
  \frac{\log(\vert{\Gamma_\delta}\vert\alpha^{-1})}{\log(1-\delta)^{-1}} 
 \]
 the probability that the random point set has dispersion smaller than $2\delta$
 is strictly larger than $1-\alpha$.
This implies
 \begin{equation}  \label{eq: N_delta_cover}
  N_\mathcal{B}(d,\varepsilon) \leq 2 \varepsilon^{-1} \log\vert{\Gamma_{\varepsilon/2}}\vert,
 \end{equation}
 where the dependence on $d$ is hidden in $\vert{\Gamma_{\varepsilon/2}}\vert$.
 \end{remark}


In the spirit of \cite{NoWo08,NoWo10,NoWo12} we are interested in 
\emph{polynomial tractability}
of the minimal dispersion, that is, $N_\mathcal{B}(d,\varepsilon)$
may not grow faster than polynomial in $\varepsilon^{-1}$ and $d$.
 The following corollary is a consequence of the theorem
and provides a condition on the $\delta$-cover for such polynomial tractability.

\begin{corollary}  \label{cor1}
For $\delta\in (0,1)$ and the set of test sets $\mathcal{B}$ let 
 $\Gamma_\delta$ be a $\delta$-cover satisfying 
 \[
  \exists c_1\geq 1\; \& \;c_2,c_3 \geq 0 
  \quad \text{s.t.}
  \quad \vert{\Gamma_\delta}\vert \leq (c_1 d^{c_2} \delta^{-1} )^{c_3 d}.
 \]
Then, for $n>c_3d$ one has
\[
  {\rm disp}_\mathcal{B}(n,d) \leq \frac{c_3d}{n}  \Big[ 
   \log\Big( \frac{ c_1 d^{c_2-1} n}{c_3}  \Big)+1 \Big].
\]
\end{corollary}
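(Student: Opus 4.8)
The plan is to apply Theorem~\ref{thm: mainthm} directly and then optimize over the free parameter $\delta$. For every admissible $\delta\in(0,1)$ the theorem gives
\[
 {\rm disp}_\mathcal{B}(n,d) \leq \frac{\log\vert\Gamma_\delta\vert}{n}+\delta,
\]
so inserting the hypothesized cardinality bound $\vert\Gamma_\delta\vert \leq (c_1 d^{c_2}\delta^{-1})^{c_3 d}$ and taking logarithms yields the explicit estimate
\[
 {\rm disp}_\mathcal{B}(n,d) \leq \frac{c_3 d}{n}\log\big(c_1 d^{c_2}\delta^{-1}\big)+\delta.
\]
The right-hand side is now a concrete function of $\delta$, and the remaining work is to select $\delta$ making this as small as possible.

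First I would regard the bound as the function $g(\delta)=\frac{c_3 d}{n}\big(\log(c_1 d^{c_2})-\log\delta\big)+\delta$ and minimize it over $\delta>0$. Differentiating (with $\log$ the natural logarithm, as in Theorem~\ref{thm: mainthm}) gives $g'(\delta)=-\frac{c_3 d}{n}\delta^{-1}+1$, whose unique root is $\delta^\ast=\frac{c_3 d}{n}$; since $g''(\delta)=\frac{c_3 d}{n}\delta^{-2}>0$ this is the global minimizer. The assumption $n>c_3 d$ is precisely what forces $\delta^\ast\in(0,1)$, so this choice is legitimate in the theorem.

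Finally I would substitute $\delta^\ast=\frac{c_3 d}{n}$. Using $\log(c_1 d^{c_2})-\log(c_3 d/n)=\log\big(\frac{c_1 d^{c_2-1} n}{c_3}\big)$ and collecting the two copies of $\frac{c_3 d}{n}$, one arrives at
\[
 {\rm disp}_\mathcal{B}(n,d) \leq \frac{c_3 d}{n}\Big[\log\Big(\frac{c_1 d^{c_2-1} n}{c_3}\Big)+1\Big],
\]
which is the claimed inequality. There is no genuine obstacle beyond spotting the correct $\delta^\ast$; the only step requiring care is verifying $\delta^\ast\in(0,1)$ so that the $\delta$-cover hypothesis of Theorem~\ref{thm: mainthm} is satisfied, and this is exactly what the condition $n>c_3 d$ encodes.
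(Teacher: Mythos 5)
Your proposal is correct and takes essentially the same approach as the paper: the paper's proof consists of the single instruction to set $\delta = c_3 d/n$ in the bound of Theorem~\ref{thm: mainthm}, which is precisely the choice you arrive at (your calculus argument merely explains why this $\delta$ is optimal, and your observation that $n>c_3 d$ guarantees $\delta\in(0,1)$ is the implicit reason for that hypothesis).
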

\begin{proof}
 Set $\delta = c_3 d/n$ in \eqref{eq: delta_cov_est} and the assertion follows. 
\end{proof}

This implies the result of Corollary~\ref{cor: B_ex}.

\begin{proofof}{Corollary~\ref{cor: B_ex}}
 By \cite[Formula (1), Theorem~1.15, Lemma~1.18]{Gn08} one has
 \[
  \vert{\Gamma_\delta}\vert \leq \frac{1}{2} (2\delta^{-1}+1)^{2d} \cdot \frac{(2d)^{2d}}{(d!)^2} 
  \leq (6\mathrm{e}\delta^{-1})^{2d}.
 \]
 Here the last inequality follows mainly by $d!>\sqrt{2\pi d}(d/\mathrm{e})^d$
 and the assertion is proven by Corollary~\ref{cor1} with $c_1=6\mathrm{e}$, $c_2=0$, $c_3=2$.
\end{proofof}

For $\mathcal{B}_{\rm per}$ we need to construct a $\delta$-cover.
\begin{lemma}  \label{lem: delta_cov_B_per}
For $\mathcal{B}_{\rm per}$ with $\delta>0$ and $m=\lceil 2d/\delta \rceil$ the set
\[
 \Gamma_\delta = \left\{ \Pi_{k=1}^d I_k(\bsa,\bsb)\mid \bsa,\bsb\in G_m \right\}
\]
with
\[
 G_m = \{ (a_1,\dots,a_d)\in [0,1]^d \mid a_k = i/m, \; i=0,\dots,m;\; k=1,\dots,d \}
\]
is a $\delta$-cover
and satisfies $\vert{\Gamma_\delta}\vert = (m+1)^{2d}$.
\end{lemma}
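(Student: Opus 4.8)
The plan is to separate the two assertions and to reduce the covering property to a one–dimensional rounding argument. The cardinality is immediate: the coordinate grid $\{0,1/m,\dots,1\}$ has $m+1$ elements, so $|G_m|=(m+1)^d$, and every member of $\Gamma_\delta$ is named by a pair $(\bsa,\bsb)\in G_m\times G_m$; counting these pairs gives $(m+1)^{2d}$ (with equality up to the degenerate pairs that produce the empty box), which is the quantity entering $\log|\Gamma_\delta|$. Note also that $\Gamma_\delta\subseteq\mathcal{B}_{\rm per}$ and is finite, as the covering definition requires.

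For the covering property fix $B=\prod_{k=1}^d I_k(\bsx,\bsy)\in\mathcal{B}_{\rm per}$ and seek $L_B=\prod_k L_k$ and $U_B=\prod_k U_k$ with grid intervals $L_k\subseteq I_k(x_k,y_k)\subseteq U_k$. Granting the one–dimensional bound $\lambda_1(U_k)-\lambda_1(L_k)\le 2/m$, the standard telescoping identity $\prod_k u_k-\prod_k \ell_k=\sum_k(\prod_{j<k}u_j)(u_k-\ell_k)(\prod_{j>k}\ell_j)$, together with $0\le \ell_k\le u_k\le 1$, yields
\[
 \lambda_d(U_B\setminus L_B)=\prod_{k=1}^d\lambda_1(U_k)-\prod_{k=1}^d\lambda_1(L_k)\le \sum_{k=1}^d \frac{2}{m}=\frac{2d}{m}\le\delta,
\]
the last inequality because $m=\lceil 2d/\delta\rceil\ge 2d/\delta$. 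Thus the whole statement reduces to the coordinatewise claim.

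For the coordinatewise construction I regard $I_k(x_k,y_k)$ as an open arc on the torus and move its two endpoints to neighbouring grid points: outward to obtain the enlarged grid arc $U_k\supseteq I_k(x_k,y_k)$ and inward to obtain the shrunken grid arc $L_k\subseteq I_k(x_k,y_k)$. Since each endpoint moves within one grid cell of width $1/m$, the length defect is $\lambda_1(U_k)-\lambda_1(L_k)\le 2/m$, exactly the required budget.

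The main obstacle is the bookkeeping for degenerate arcs. If the arc is shorter than the rounding adjustments, the two inward–moved endpoints cross over and $I_k(x_k^+,y_k^-)$ flips to the complementary long arc; there one instead sets $L_k=\emptyset=I_k(1,0)$ and checks that the enlarged arc already has length $\le 2/m$, so the defect bound survives. The genuinely delicate case is a near–full arc, i.e. the whole torus minus a gap that contains no grid point (in that coordinate $x_k=y_k\notin G_m$, say): then no grid interval contains it and the upper inclusion $B\subseteq U_B$ holds only up to a null set. This is a boundary artifact irrelevant to the application, since by the remark after Lemma~\ref{lem: aux_lem} only the lower inclusion $L_B\subseteq B$ with $\lambda_d(B\setminus L_B)\le\delta$ is used for the dispersion estimate, and a lower grid arc with deficit at most $2/m$ always exists. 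I would therefore complete the proof by the finite case distinction (direct versus wraparound arc; generic, short, and near–full) needed to verify the two inclusions and the per–coordinate length defect, this enumeration being the only substantial work.
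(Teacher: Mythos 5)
Your proposal takes essentially the same route as the paper's proof: per-coordinate rounding of arc endpoints to adjacent grid points, the one-dimensional defect bound $\lambda_1(I_k^U)-\lambda_1(I_k^L)\le 2/m$, and the telescoping product identity giving $\lambda_d(U_B\setminus L_B)\le 2d/m\le\delta$; the case enumeration you defer (direct versus wraparound arc, short versus generic) is precisely the four-case analysis that constitutes the body of the paper's proof. The one place you genuinely diverge is the near-full arc, and your concern is legitimate: in the paper's Case 2 ($x_k\ge y_k$, $|x_k-y_k|\le 1/m$) the upper set is taken to be $I_k^U=[0,1]\setminus\{a_k\}$, and the inclusion $[0,1]\setminus[y_k,x_k]\subseteq I_k^U$ requires the grid point $a_k$ to lie in $[y_k,x_k]$, which fails whenever $[y_k,x_k]$ sits in the interior of a single grid cell (e.g.\ $x_k=y_k\notin G_m$); so the paper's claim that in all cases $I_k^L\subseteq I_k(\bsx,\bsy)\subseteq I_k^U$ is literally false, though only by a null set, and no admissible choice of $a_k$ repairs it. Your fix --- invoking the remark after Lemma~\ref{lem: aux_lem} that only the lower approximation $L_B\subseteq B$ with $\lambda_d(B\setminus L_B)\le\delta$ enters the dispersion bound --- is exactly right and keeps the application to Corollary~\ref{cor: cor_B_per} sound. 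Two checks you left implicit do hold but deserve the computation: in the short-arc case the bound $\lambda_1(I_k^U)\le 2/m$ needs the discreteness of the grid, since the continuous estimate only gives $(y_k-x_k)+2/m\le 3/m$, whereas $\lceil my_k\rceil\le\lfloor mx_k\rfloor+2$ for $y_k-x_k\le 1/m$ gives $2/m$; and the cardinality claim is really an upper bound $\vert\Gamma_\delta\vert\le(m+1)^{2d}$ (distinct pairs can produce the same empty box), which is all that Theorem~\ref{thm: mainthm} requires --- the paper, incidentally, offers no argument for the cardinality at all.
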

\begin{proof}
For arbitrary $\bsx,\bsy\in [0,1]^d$ with $\bsx=(x_1,\dots,x_d)$ and
$\bsy=(y_1,\dots,y_d)$ there are 
\begin{align*}
 \bsa& =(a_1,\dots,a_d)\in G_m,\quad
 \bar{\bsa} =(\bar{a}_1,\dots,\bar{a}_d)\in G_m\\
  \bsb& =(b_1,\dots,b_d)\in G_m,\quad
 \bar{\bsb} =(\bar{b}_1,\dots,\bar{b}_d)\in G_m,
 \end{align*}
such that
\begin{align*}
 a_k & \leq x_k \leq \bar{a}_k\leq a_k+1/m,\qquad
 b_k \leq y_k \leq \bar{b}_k \leq b_k+1/m.
\end{align*}
Define $B(\bsx,\bsy)=\Pi_{k=1}^d I_k(\bsx,\bsy)$ 
and note that it is enough to find $L_B, U_B\in \Gamma_\delta$ 
with $L_B \subseteq B(\bsx,\bsy)\subseteq U_B$ and $\lambda_d(U_B\setminus L_B)\leq \delta$.
For any coordinate $k\in \{1,\dots,d\}$ we distinguish four cases illustrated in Figure~\ref{fig:cases}:
\begin{enumerate}
 \item Case: $\vert{x_k-y_k}\vert\leq 1/m$ and $x_k<y_k$:\\
 Define $I_k^L=\emptyset$ and $I_k^U=(a_k,\bar{b}_k)$. 
 (Here $I_k^L=[0,1]\setminus[0,1]=\emptyset$.)
 \item Case: $\vert{x_k-y_k}\vert\leq 1/m$ and $x_k\geq y_k$:\\
 Define $I_k^L=[0,1]\setminus [b_k,\bar{a}_k]$ and 
 $I_k^U=[0,1]\setminus[a_k,a_k]$. (Here $I_k^U=[0,1]\setminus\{a_k\}$.)
 \item Case: $\vert{x_k-y_k}\vert> 1/m$ and $x_k<y_k$:\\
 Define $I_k^L=(\bar{a}_k,b_k)$ and $I_k^U=(a_k,\bar{b}_k)$.
  \item Case: $\vert{x_k-y_k}\vert> 1/m$ and $x_k\geq y_k$:\\
  Define $I_k^L=[0,1]\setminus[b_k,\bar{a}_k]$ and $I_k^U=[0,1]\setminus[\bar{b}_k,a_k]$. 
\end{enumerate}
\begin{figure}    
    \qquad\qquad\qquad\quad
    \includegraphics[scale=1]{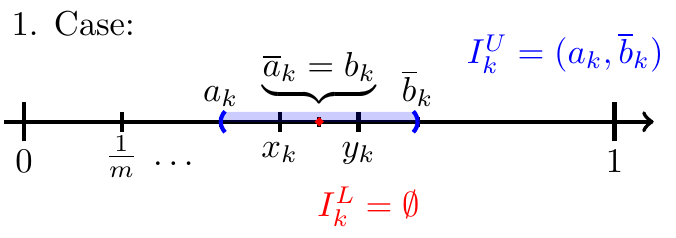}
    \vspace*{1ex}
    \qquad\qquad\qquad\quad 
    \includegraphics[scale=1]{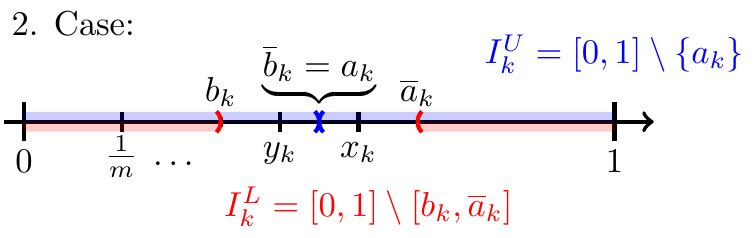}
    \vspace*{1ex}
    \qquad\qquad\qquad\quad 
    \includegraphics[scale=1]{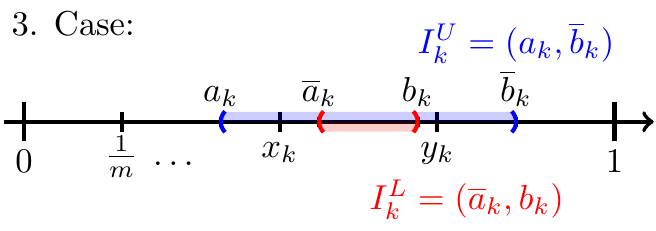}
        \vspace*{1ex}
    \qquad\qquad\qquad\quad 
    \includegraphics[scale=1]{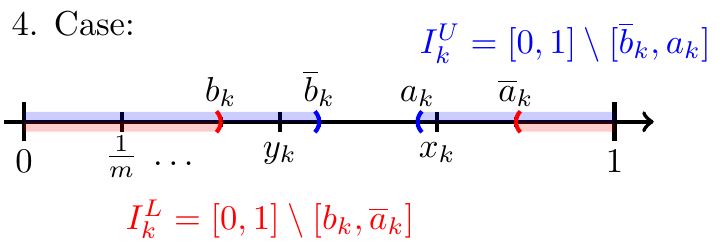}
    \caption{
  The four cases from the proof of Lemma~\ref{lem: delta_cov_B_per}
  to show the existence of $I_k^L, I_k^U$ such that
  $I_k^L\subseteq I_k(\bsx,\bsy) \subseteq I_k^U$
  and $\lambda_1(I_k^U\setminus I_k^L)\leq 2/m$ are illustrated.
  }\label{fig:cases}
\end{figure}
In all cases we have $I_k^L\subseteq I_k(\bsx,\bsy) \subseteq I_k^U$ 
 as well as $\lambda_1(I_k^U\setminus I_k^L)\leq 2/m$.
For
$
L_B = \Pi_{i=1}^d I_i^L \in \Gamma_\delta$ and $U_B = \Pi_{i=1}^d I_i^U \in \Gamma_\delta
$
the inclusion property with respect to $B(x,y)$ does hold and 
\begin{align*}
 \lambda_d(U_B\setminus L_B) & 
 = \Pi_{i=1}^d \lambda_1(I_i^U) - \Pi_{i=1}^d \lambda_1(I_i^L)\\
& = \sum_{k=1}^d \left[\Pi_{i=1}^{k-1} \lambda_1(I_i^L) 
(\lambda_1(I_k^U)-\lambda_1(I_k^L)) 
\Pi_{i=k+1}^d \lambda_1(I_i^U)\right] \leq \frac{2d}{m}.
\end{align*}
By the choice of $m$ the right-hand side $2d/m$ is bounded by $\delta$ and the assertion is proven.
\end{proof}

Now we easily can prove an upper bound of the minimal dispersion according 
to $\mathcal{B}_{\rm per}$ as formulated in Corollary~\ref{cor: cor_B_per}.

\begin{proofof}{Corollary~\ref{cor: cor_B_per}}
By the previous lemma we know that there is a $\delta$-cover
with cardinality bounded by $(4d\delta^{-1})^{2d}$.
Then by Corollary~\ref{cor1} with $c_1=4$, $c_2=1$ and $c_3=2$ the proof is finished.
\end{proofof}

\section{Conclusion}
Based on $\delta$-covers we provide
in the main theorem an
estimate of the
minimal dispersion similar to the one of \eqref{eq: vc_est}. 
In the case where 
the VC-dimension of the set of test sets is not known, but a suitable $\delta$-cover can be
constructed our Theorem~\ref{thm: mainthm} leads to new results, as illustrated for $\mathcal{B}_{\rm per}$. 
One might argue, that we only show existence of ``good'' point sets. However, 
Remark~\ref{rem} tells us that a uniformly distributed random point set has small dispersion with high probability.
As far as we know, an explicit
construction of such point sets is not known.

\begin{acknowledgement}
The author thanks Aicke Hinrichs, David Krieg, Erich Novak and Mario Ullrich
for fruitful discussions to this topic.
\end{acknowledgement}

%

\begin{thebibliography}{10}
\providecommand{\url}[1]{{#1}}
\providecommand{\urlprefix}{URL }
\expandafter\ifx\csname urlstyle\endcsname\relax
  \providecommand{\doi}[1]{DOI~\discretionary{}{}{}#1}\else
  \providecommand{\doi}{DOI~\discretionary{}{}{}\begingroup
  \urlstyle{rm}\Url}\fi

\bibitem{AiHiRu15}
Aistleitner, C., Hinrichs, A., Rudolf, D.: On the size of the largest empty box
  amidst a point set.
\newblock Discrete Appl. Math. \textbf{230}, 146--150 (2017)

\bibitem{BaDaDeGr14}
Bachmayr, M., Dahmen, W., DeVore, R., Grasedyck, L.: Approximation of
  high-dimensional rank one tensors.
\newblock Constr. Approx. \textbf{39}(2), 385--395 (2014)

\bibitem{BeCh87}
Beck, J., Chen, W.: Irregularities of Distribution (Cambridge Tracts in
  Mathematics).
\newblock Cambridge University Press (1987)

\bibitem{BlEhHaWa89}
Blumer, A., Ehrenfeucht, A., Haussler, D., Warmuth, M.: Learnability and the
  {V}apnik-{C}hervonenkis dimension.
\newblock J. Assoc. Comput. Mach. \textbf{36}(4), 929--965 (1989)

\bibitem{DiPi10}
Dick, J., Pillichshammer, F.: Digital nets and sequences: Discrepancy Theory and Quasi-Monte Carlo Integration.
\newblock Cambridge University Press, Cambridge (2010)

\bibitem{DiRuZh13}
Dick, J., Rudolf, D., Zhu, H.: Discrepancy bounds for uniformly ergodic
  {M}arkov chain quasi-{M}onte {C}arlo.
\newblock Ann. Appl. Probab. \textbf{26}, 3178--3205 (2016)

\bibitem{DuJi13}
Dumitrescu, A., Jiang, M.: On the largest empty axis-parallel box amidst $n$
  points.
\newblock Algorithmica \textbf{66}(2), 225--248 (2013)

\bibitem{DuJi16}
Dumitrescu, A., Jiang, M.: Perfect vector sets, properly overlapping
  partitions, and largest empty box.
\newblock Preprint, Available at {\rm https://arxiv.org/abs/1608.06874}  (2016)

\bibitem{EdGrLiMi03}
Edmonds, J., Gryz, J., Liang, D., Miller, R.: Mining for empty spaces in large
  data sets.
\newblock Theoretical Computer Science \textbf{296}(3), 435--452 (2003)

\bibitem{Gn08}
Gnewuch, M.: Bracketing numbers for axis-parallel boxes and applications to
  geometric discrepancy.
\newblock J. Complexity \textbf{24}, 154--172 (2008)

\bibitem{HeNoWaWo01}
Heinrich, S., Novak, E., Wasilkowski, G., Wo{\'z}niakowski, H.: The inverse of
  the star-discrepancy depends linearly on the dimension.
\newblock Acta Arith. \textbf{96}, 279--302 (2001)

\bibitem{Hl76}
Hlawka, E.: {A}bsch{\"a}tzung von trigonometrischen {S}ummen mittels
  diophantischer {A}pproximationen.
\newblock {\"O}sterreich. Akad. Wiss. Math.-Naturwiss. Kl. S.-B. II,
  \textbf{185}, 43--50 (1976)

\bibitem{NaLeHs84}
Naamad, A., Lee, D., Hsu, W.: On the maximum empty rectangle problem.
\newblock Discrete Appl. Math. \textbf{8}(3), 267 -- 277 (1984)

\bibitem{Ni83}
Niederreiter, H.: A quasi-{M}onte {C}arlo method for the approximate
  computation of the extreme values of a function.
\newblock Studies in Pure Mathematics, Birkh{\"a}user, Basel pp. 523--529
  (1983)

\bibitem{Ni92}
Niederreiter, H.: Random {N}umber {G}eneration and quasi-{M}onte {C}arlo
  {M}ethods.
\newblock Society for Industrial and Applied Mathematics (1992)

\bibitem{No15}
Novak, E.: Some results on the {C}omplexity of {N}umerical {I}ntegration.
In: R.~Cools, D.~Nuyens (eds.) Monte Carlo and
Quasi-Monte Carlo Methods 2014, pp. 161--183. Springer-Verlag, Berlin/Heidelberg (2016) 
  

\bibitem{NoRu16}
Novak, E., Rudolf, D.: Tractability of the approximation of high-dimensional
  rank one tensors.
\newblock Constr. Approx. \textbf{43}(1), 1--13 (2016)

\bibitem{NoWo08}
Novak, E., Wo{\'z}niakowski, H.: Tractability of multivariate problems. {V}ol.
  1: {L}inear information, \emph{EMS Tracts in Mathematics}, vol.~6.
\newblock European Mathematical Society (EMS), Z{\"u}rich (2008)

\bibitem{NoWo10}
Novak, E., Wo{\'z}niakowski, H.: Tractability of multivariate problems. {V}ol.
  2: {S}tandard information for functionals, \emph{EMS Tracts in Mathematics},
  vol.~12.
\newblock European Mathematical Society (EMS), Z{\"u}rich (2010)

\bibitem{NoWo12}
Novak, E., Wo{\'z}niakowski, H.: Tractability of multivariate problems. {V}ol.
  3: {S}tandard information for operators, \emph{EMS Tracts in Mathematics},
  vol.~18.
\newblock European Mathematical Society (EMS), Z{\"u}rich (2012)

\bibitem{RoTi96}
Rote, G., Tichy, R.: Quasi-monte carlo methods and the dispersion of point
  sequences.
\newblock Math. Comput. \textbf{23}(8-9), 9--23 (1996)

\bibitem{So17}
Sosnovec, J.: 
A note on minimal dispersion of point sets in the unit cube.
\newblock Preprint, Available at {https://arxiv.org/abs/1707.08794}  (2017)


\bibitem{Ul15}
Ullrich, M.: A lower bound for the dispersion on the torus.
\newblock Mathematics and Computers in Simulation, in press  (2015)

\end{thebibliography}
%

\end{document}